\newtheorem{theorem}{Theorem}
\newtheorem{lemma}{Lemma}
\begin{document}
\title{Asymptotically Optimal Massey-Like Inequality on Guessing Entropy With Application to Side-Channel Attack Evaluations}

\author{Andrei T\u{a}n\u{a}sescu, Marios O. Choudary, Olivier Rioul, and Pantelimon George Popescu
\thanks{A. T\u{a}n\u{a}sescu, M.O. Choudary, and P.G. Popescu are with the Department of Computer Science and Engineering, University POLITEHNICA of Bucharest, Splaiul Independe\c{t}ei 313, (6), Bucharest, Romania.}
\thanks{O. Rioul is with the LTCI, T\'{e}l\'{e}com Paris, Institut Polytechnique de Paris, 91120, Palaiseau, France.}
\thanks{Correspondence: pgpopescu@yahoo.com}%
}

\markboth{
}%
{T\u{a}n\u{a}sescu \MakeLowercase{\textit{et al.}}: Asymptotically Optimal Massey-like Inequality on Guessing Entropy With Application to Side-Channel Attack Evaluations}
\maketitle

\begin{abstract}
A Massey-like inequality is any useful lower bound on guessing entropy in terms of the computationally scalable Shannon entropy.
The asymptotically optimal Massey-like inequality is determined and further refined for finite-support distributions. The impact of these results are highlighted for side-channel attack evaluation where guessing entropy is a key metric. In this context, the obtained bounds are compared to the state of the art.
\end{abstract}

\begin{IEEEkeywords}
Massey~inequality, guessing~entropy, Shannon~entropy, side-channel attacks.
\end{IEEEkeywords}
\IEEEpeerreviewmaketitle
\section{Introduction}
\IEEEPARstart{T}{he} guessing entropy associated to a (positive descending) probability distribution $\mathbf{p}=\left(p_1,\,p_2,\,\dots,\,p_n\right)$ with ${p}_1\geq \dots \geq {p}_n>0$ is the expected value of the random variable $G\left(\mathbf{p}\right)$ given by $\mathbb{P}\left[G\left(\mathbf{p}\right)=i\right]={p}_i$  ($i=1,\ldots,n$), i.e., $\mathbb{E}\left[G\left(\mathbf{p}\right)\right]=\sum_{i=1}^n i {p}_i$. It corresponds to the minimal average number of binary questions required to guess the value of a random variable distributed according to $\mathbf{p}$~\cite{massey1994guessing}. J. Massey has provided a well-known relation between guessing entropy and the Shannon entropy $H\left(\mathbf{p}\right)=-\sum_{i=1}^n {p}_i\log {p}_i$ which reads~\cite{massey1994guessing}  $\mathbb{E}\left[G\left(\mathbf{p}\right)\right]\geq 2^{H\left(\mathbf{p}\right)-2}+1$ when $H\left(\mathbf{p}\right)\geq 2$ bits. 

Massey's inequality has been recently improved in various ways,
yet all known refinements share the same shape. For instance, in an ISIT paper, Popescu and Choudary~{\cite{popescu2019refinement}} proved
\begin{align*}
    \mathbb{E}\left[G\left(\mathbf{p}\right)\right]
    \geq & 2^{H\left(\mathbf{p}\right)+2{p}_n-2}+1-{p}_n\\
    \geq & 2^{H\left(\mathbf{p}\right)+{p}_n-2}+1-\frac{1}{2}{p}_n\\
    \geq & 2^{H\left(\mathbf{p}\right)-2}+1,
\end{align*}
subject to the same condition $H\left(\mathbf{p}\right)\geq 2$ bits as in the Massey inequality. Meanwhile, Rioul's inequality~\cite{note}, published in a CHES paper~{\cite{de2019best}} states that for all values of $H\left(\mathbf{p}\right)\geq 0$,
\begin{equation}
    \mathbb{E}\left[G\left(\mathbf{p}\right)\right] > \frac1e 2^{H\left(\mathbf{p}\right)},\label{eq:rioul}
\end{equation}
which refines Massey's inequality when $H\left(\mathbf{p}\right)\geq \log\frac{e}{1-e/4}$. Finally, in an Entropy paper, Tanasescu and Popescu~\cite{tuanuasescu2020exploiting} found that under the same condition as in Massey's inequality,
\begin{align*}
    \mathbb{E}\left[G\left(\mathbf{p}\right)\right]\geq&\sup_{\alpha\in\left[0,1/2\right]} 2^{H\left(\mathbf{p}\right)+\frac{h\left(\alpha\right)}{1-\alpha}{p}_n-2}+1 -\frac{\alpha}{1-\alpha}{p}_n\label{eq:gap}
    \\
    \geq &2^{H\left(\mathbf{p}\right)+2{p}_n-2}+1 -{p}_n> 2^{H\left(\mathbf{p}\right)-2}+1\nonumber.
\end{align*}
The authors of~\cite{tuanuasescu2020exploiting} hinted that a similar refinement can be found for inequality~\eqref{eq:rioul}.

In this paper, we optimize exponential relations between the guessing and Shannon entropies, i.e., lower bounds of the form $\mathbb{E}\left[G\left(\mathbf{p}\right)\right]\geq a\cdot b^{H\left(\mathbf{p}\right)} + c$ valid when the Shannon entropy lies above a given threshold. We arrive at an improved Rioul's inequality~\cite{rioul2021variations} by an additive constant of $1/2$, which is asymptotically optimal among other global lower bounds depending only on the Shannon entropy as $H\left(\mathbf{p}\right)\rightarrow \infty$. Then, using the techniques of~\cite{popescu2019refinement,tuanuasescu2020exploiting} we further refine this inequality for finite support distributions allowing us to increase the multiplicative constant depending on the smallest probability $p_n$. Finally, we apply our results to side-channel attack evaluation, where guessing entropy is a key metric~\cite{mazumdar2013constrained,choudary2017efficient,carre2020persistent}, comparing our results to the best on the market and showing that under certain conditions the Shannon entropy is indeed a precious quantifier of guessing entropy.

\section{The Asymptotically Optimal Massey-Like Inequality}
In this section we consider bounds of the form $\mathbb{E}\left[G\left(\mathbf{p}\right)\right]\geq a\cdot b^{H\left(\mathbf{p}\right)} + c$ with $a > 0$ and seek to determine the optimal coefficients $a,\,b,\,c$ prioritizing the asymptotic shape as $H\left(\mathbf{p}\right)\rightarrow \infty$ holding whenever $H\left(\mathbf{p}\right)$ is larger then a given threshold. 

\begin{theorem}
The optimal Massey-like inequality $\mathbb{E}\left[G\left(\mathbf{p}\right)\right]\geq a\cdot b^{H\left(\mathbf{p}\right)}+c$ as $H\left(\mathbf{p}\right)\rightarrow \infty$ is Rioul's improved inequality~\cite{rioul2021variations}
\begin{equation}
    \mathbb{E}\left[G\left(\mathbf{p}\right)\right]\geq \frac{1}{e}2^{H\left(\mathbf{p}\right)}+\frac{1}{2},\label{ineq:rioulimproved}
\end{equation}
which holds for all values of $H\left(\mathbf{p}\right)\geq 0$.
\end{theorem}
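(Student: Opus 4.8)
The plan is to collapse the whole problem onto a single scalar inequality by identifying, for each fixed value of the guessing entropy, the distribution that maximizes the Shannon entropy. First I would fix $\mathbb{E}\left[G\left(\mathbf{p}\right)\right]=\mu$ and maximize $H(\mathbf{p})$ subject to $\sum_i p_i=1$ and $\sum_i i\,p_i=\mu$. A Lagrange-multiplier computation forces the maximizer to be geometric, $p_i\propto r^{i-1}$, which is automatically nonincreasing; matching the prescribed mean fixes $r=(\mu-1)/\mu$. Evaluating the entropy of this geometric law yields the envelope
\[
H(\mathbf{p})\le f(\mu):=\log_2\mu+(\mu-1)\log_2\frac{\mu}{\mu-1},
\]
valid for every distribution with $\mathbb{E}\left[G\left(\mathbf{p}\right)\right]=\mu\ge 1$. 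Finite-support distributions only lower the attainable entropy (their extremizer is a truncated geometric), so the bound persists, while the untruncated geometric attains it exactly.

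Next I would recast \eqref{ineq:rioulimproved} as an analytic statement. Writing $\mu=\mathbb{E}\left[G\left(\mathbf{p}\right)\right]$, the target $\mu\ge\tfrac1e 2^{H}+\tfrac12$ is equivalent to $H\le\log_2\!\big(e(\mu-\tfrac12)\big)$, and $\mu-\tfrac12\ge\tfrac12>0$ makes the logarithm well defined. Combined with the envelope $H\le f(\mu)$, it therefore suffices to prove the purely scalar inequality $f(\mu)\le\log_2\!\big(e(\mu-\tfrac12)\big)$ for all $\mu\ge 1$; the stated threshold $H\ge 0$ is then vacuous because $\mu\ge 1$ already corresponds to $H\ge 0$.

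Then I would prove this scalar inequality by studying $\psi(\mu):=\ln 2\cdot\big(\log_2(e(\mu-\tfrac12))-f(\mu)\big)$. The boundary values are immediate: $\psi(1)=1-\ln 2>0$, and a Taylor expansion gives $\psi(\mu)=\tfrac{1}{24\mu^2}+O(\mu^{-3})\to 0^+$ as $\mu\to\infty$. The crux is monotonicity: a short computation collapses the derivative to
\[
\psi'(\mu)=\frac{2}{2\mu-1}-\ln\frac{\mu}{\mu-1},
\]
so that $\psi'<0$ on $(1,\infty)$ is exactly the classical sharp logarithmic bound $\ln\frac{\mu}{\mu-1}>\frac{2}{2\mu-1}$ (equivalently $\ln(1+\tfrac1x)>\tfrac{2}{2x+1}$ with $x=\mu-1$). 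Since $\psi$ is strictly decreasing to the limit $0$, it stays positive on all of $[1,\infty)$, which establishes the inequality.

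Finally, optimality as $H\to\infty$ would follow by inverting the envelope along the geometric family itself. Expanding $f(\mu)=\log_2(e\mu)-\frac{\log_2 e}{2\mu}+O(\mu^{-2})$ and solving for $\mu$ gives $\mu=\tfrac1e 2^{H}+\tfrac12+o(1)$, so the geometric distributions realize $\mathbb{E}\left[G\left(\mathbf{p}\right)\right]=\tfrac1e 2^{H}+\tfrac12+o(1)$. Consequently no valid bound $a\,b^{H}+c$ can match these points as $H\to\infty$ unless $b\le 2$; when $b=2$ validity forces $a\le\tfrac1e$; and when $(a,b)=(\tfrac1e,2)$ it forces $c\le\tfrac12$. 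The triple $(\tfrac1e,2,\tfrac12)$ saturates each of these in turn and is thus asymptotically optimal. I expect the main obstacle to be the monotonicity step, namely tying the sign of $\psi'$ precisely to the sharp logarithmic inequality: any looser estimate of $\ln\frac{\mu}{\mu-1}$ fails near $\mu=1$, where $\psi'(\mu)\to-\infty$, and would therefore break the argument exactly in the regime that controls moderate entropies.
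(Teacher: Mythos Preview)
Your proposal is correct and follows essentially the same route as the paper: reduce to the geometric envelope $H\le f(\mu)$, translate \eqref{ineq:rioulimproved} into the scalar inequality $f(\mu)\le\log_2\!\big(e(\mu-\tfrac12)\big)$, and prove the latter by showing the difference is decreasing in $\mu$ with limit~$0$. The only cosmetic difference is that you invoke the classical bound $\ln\frac{\mu}{\mu-1}>\frac{2}{2\mu-1}$ directly to get $\psi'<0$, whereas the paper establishes the same sign by computing the second derivative $\psi''(\mu)=\frac{1}{4\mu(\mu-1)(\mu-1/2)^2}>0$ and noting $\psi'\to 0$; the asymptotic optimality argument via term-by-term matching of the expansions is likewise the same in substance.
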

\begin{proof}
Following Massey's approach~\cite{massey1994guessing}, finding the best lower bound on guessing entropy is equivalent with the statement that among all probability distributions with guessing entropy $\mu>1$, the maximal Shannon entropy is attained by the geometric distribution with mean $\mu$, that is,
\begin{equation*}
    H\left(\mathbf{p}\right)\leq 
    \log\left(\mu-1\right)-\mu\log\left(1-1/\mu\right)
\end{equation*}
where $\log()$ denotes logarithm to base 2.
The inequality is actually strict when $\mathbf{p}$ has finite length, but the upper bound can be approached as closely as desired.

We seek bounds of the form $\mathbb{E}\left[G\left(\mathbf{p}\right)\right]\geq a\cdot b^{H\left(\mathbf{p}\right)}+c$, i.e. $H\left(\mathbf{p}\right)\leq \log_b\frac{\mu-c}{a}$. In order for this to be valid for all $\mu$, we should necessarily have
\begin{equation*}
    \log_b\frac{\mu-c}{a}\geq\log\left(\mu-1\right)-\mu\log\left(1-1/\mu\right).
\end{equation*}
In particular, as $\mu\rightarrow \infty$, the expression on the left has asymptotic
\begin{equation*}
    \log_b\frac{\mu-c}{a}=\log_b \mu-\log_b{a}-\frac{c\log_b e}{\mu}+o\left(1/\mu\right),
\end{equation*}
while the expression on the right has asymptotic
\begin{equation*}
\log\left(\mu-1\right)-\mu\log\left(1-1/\mu\right)=\log\mu+\log e-\frac{\log e}{2\mu}+o\left(1/\mu\right).
\end{equation*}
As a consequence we necessarily have $\log_b\mu\geq \log\mu$, i.e. $\log b\leq 1$ or $b\leq 2$, so that the optimal (maximum) value of $b$ is $b=2$. Next, we should have $-\log a\geq \log e$, i.e. $a\leq 1/e$, so that the optimal (maximum) value of $a$ is $1/e$. Finally, we should have $-c\log e\geq -\left(\log e\right)/2$, i.e. $c\leq 1/2$, so that the optimal (maximum) value of $c$ is $c=1/2$.

The asymptotically optimal bound then writes
\begin{equation}
    \log\left(\mu-1/2\right)+\log e\geq\log\left(\mu-1\right)-\mu\log\left(1-1/\mu\right)
    \label{asymptopt}
\end{equation}
which readily gives~\eqref{ineq:rioulimproved} when $\mu$ or $H(\mathbf{p})$ tend to infinity. 
A simple proof of~\eqref{ineq:rioulimproved} for all values of $H(\mathbf{p})>0$ can be found in~\cite{rioul2021variations}, but one can also prove directly that~\eqref{asymptopt}  holds for all values of $\mu>1$ as follows. 
The first and second-order derivatives of the difference $f(\mu)=\ln\left(\mu-1/2\right)+1-\ln\left(\mu-1\right)+\mu\ln\left(1-1/\mu\right)$ between the two sides of~\eqref{asymptopt} (expressed in natural units) are 
\begin{align*}
    f'(\mu)
    &=\frac{1}{\mu-1/2}+\ln\bigl(1-\frac1\mu\bigr)\\
f''(\mu)&=-\frac1{(\mu-1/2)^2}+\frac1{\mu(\mu-1)}
 = \frac1{4\mu(\mu-1)(\mu-1/2)^2}.
\end{align*}
It follows that $f''>0$, so that $f'$ is increasing while also vanishing as $\mu\to+\infty$, hence $f'<0$ for all $\mu>1$. As a consequence, $f$ is decreasing for all $\mu>1$. Therefore, since~\eqref{asymptopt} holds when $\mu\to+\infty$, it also holds for all $\mu>1$.  
%
\end{proof}

We conclude this section by remarking that the obtained optimal inequality~\eqref{ineq:rioulimproved} only improves~\eqref{eq:rioul} by an additive constant $1/2$. Rioul's strengthened inequality~\cite{rioul2021variations} now writes
\begin{equation}
    \mathbb{E}\left[G\left(\mathbf{p}\right)\right]\geq \frac{1}{e}2^{H\left(\mathbf{p}\right)}+\frac{1}{2}.\tag{\ref{ineq:rioulimproved}}
\end{equation}
It is further generalized to scalable R\'{e}nyi entropies in~\cite{rioul2021variations}.

\section{Refinement for Finite Support Distributions}
In this section we find a new relation between the Shannon and guessing entropy, dependent on the minimal probability of a given distribution, further refining Rioul's improved inequality~\eqref{ineq:rioulimproved}.

We begin with a direct improvement following the technique of~\cite{popescu2019refinement,tuanuasescu2020exploiting}.
\begin{lemma}
{For any positive descending probability distribution $\mathbf{p}\in\mathbb{R}^n$ such that $H\left(\mathbf{p}\right)\geq 1$ bit, we have}\label{rem:Rioul}
\begin{align*}
    \mathbb{E}\left[G\left(\mathbf{p}\right)\right]
    \geq & \sup_{\alpha\in\left[0,1/2\right]} \frac{1}{e}2^{H\left(\mathbf{p}\right)+{p}_nh\left(\alpha\right)}-\alpha{p}_{n}+\frac{1}{2}\\
    \geq & \frac{1}{e}2^{H\left(\mathbf{p}\right)+{p}_n}-\frac{1}{2}{p}_{n}+\frac{1}{2}\geq \frac{1}{e}2^{H\left(\mathbf{p}\right)}+\frac{1}{2}.
\end{align*}
\end{lemma}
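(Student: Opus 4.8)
The plan is to apply Rioul's improved inequality~\eqref{ineq:rioulimproved} not to $\mathbf{p}$ directly, but to an auxiliary distribution obtained by splitting the least-likely outcome, following the refinement technique of~\cite{popescu2019refinement,tuanuasescu2020exploiting}. Fix $\alpha\in[0,1/2]$ and let $h(\alpha)=-\alpha\log\alpha-(1-\alpha)\log(1-\alpha)$ denote the binary entropy. I would introduce the length-$(n+1)$ distribution
\[
    \mathbf{q}=\bigl(p_1,\dots,p_{n-1},(1-\alpha)p_n,\alpha p_n\bigr),
\]
which is still positive and descending, since $\alpha p_n\leq(1-\alpha)p_n\leq p_n\leq p_{n-1}$ whenever $\alpha\in[0,1/2]$.

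First I would compute the two entropies of $\mathbf{q}$ in closed form. Expanding the Shannon entropy and using that the $\log p_n$ contributions of the two split masses recombine into $-p_n\log p_n$, the remaining cross terms collapse to $p_n h(\alpha)$, giving $H(\mathbf{q})=H(\mathbf{p})+p_n h(\alpha)$. Likewise, placing the split masses at ranks $n$ and $n+1$ and telescoping the sum $\sum i\,q_i$ yields $\G{\mathbf{q}}=\G{\mathbf{p}}+\alpha p_n$. These two identities are the heart of the argument, reducing the claim to a single substitution.

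Next I would apply~\eqref{ineq:rioulimproved} to $\mathbf{q}$, which is legitimate since that bound holds for all $H\geq 0$ and $H(\mathbf{q})\geq H(\mathbf{p})$. This gives $\G{\mathbf{p}}+\alpha p_n\geq \frac1e 2^{H(\mathbf{p})+p_n h(\alpha)}+\frac12$. Rearranging produces the per-$\alpha$ bound, and taking the supremum over $\alpha\in[0,1/2]$ establishes the first displayed inequality. Setting $\alpha=1/2$, where $h(1/2)=1$ bit, immediately yields the second inequality.

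The remaining, and most delicate, step is the final inequality, which after cancellation amounts to $\frac1e 2^{H(\mathbf{p})}\bigl(2^{p_n}-1\bigr)\geq \frac12 p_n$. Here I would invoke the tangent-line bound $2^{x}-1\geq x\ln 2$ for $x\geq 0$ (convexity of $2^x$), reducing the claim to $\frac1e 2^{H(\mathbf{p})}\ln 2\geq \frac12$, i.e.\ $2^{H(\mathbf{p})}\geq e/(2\ln 2)\approx 1.96$. This is precisely where the hypothesis $H(\mathbf{p})\geq 1$ bit enters, since then $2^{H(\mathbf{p})}\geq 2>e/(2\ln 2)$. I expect this threshold verification to be the main obstacle, as it is what pins down why the clean one-bit cutoff suffices rather than the slightly smaller exact value $\log\bigl(e/(2\ln 2)\bigr)$.
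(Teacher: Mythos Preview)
Your proposal is correct and follows essentially the same approach as the paper: construct the split distribution $\mathbf{q}$, compute $H(\mathbf{q})=H(\mathbf{p})+p_n h(\alpha)$ and $\G{\mathbf{q}}=\G{\mathbf{p}}+\alpha p_n$, apply~\eqref{ineq:rioulimproved} to $\mathbf{q}$, then take the supremum, substitute $\alpha=1/2$, and use the tangent bound $2^{x}\geq 1+x\ln 2$ to reduce the last inequality to $H(\mathbf{p})\geq\log\bigl(e/(2\ln 2)\bigr)$, which is implied by $H(\mathbf{p})\geq 1$. The paper's argument is identical in structure and in all key estimates.
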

\begin{proof}
Consider a positive decreasing distribution $\mathbf{p}=\left(p_1,\,p_2,\,\dots,\,p_n\right)$ with $H\left(\mathbf{p}\right)\geq 2$. Following the approach in~\cite{popescu2019refinement} we construct the new probability distribution $\mathbf{q}=\left(p_1,\,p_2,\,\dots,\,p_{n-1},\,\left(1-\alpha\right){p}_n,\,\alpha{p}_n\right)$, which is decreasing and strictly positive if and only if $\alpha\in\left(0,\,1/2\right]$. From the grouping property of entropy, $H\left(\mathbf{q}\right)=H\left(\mathbf{p}\right)+{p}_n h\left(\alpha\right)$, and moreover $\mathbb{E}\left[G\left(\mathbf{q}\right)\right]=\mathbb{E}\left[G\left(\mathbf{p}\right)\right]+\alpha{p}_{n}$. Then
\begin{align}
    \mathbb{E}\left[G\left(\mathbf{p}\right)\right] =& \mathbb{E}\left[G\left(\mathbf{q}\right)\right] - \alpha p_n 
    \geq \frac{1}{e}2^{H\left(\mathbf{q}\right)}-\alpha{p}_{n}+\frac{1}{2}\label{eq:halp_rioul}\\ =& \frac{1}{e}2^{H\left(\mathbf{p}\right)+{p}_nh\left(\alpha\right)}-\alpha{p}_{n}+\frac{1}{2}.\nonumber
\end{align}
{The first inequality follows taking the {supremum} over $\alpha$ in eq.~{\eqref{eq:halp_rioul}}, the second by substituting $\alpha=1/2$. To justify the third, we use $2^x>1+x\ln2$ for $x=p_n$ obtaining}
\begin{align*}
    \frac{1}{e}2^{H\left(\mathbf{p}\right)+{p}_n}-\frac{1}{2}{p}_{n} \geq& \frac{1}{e}2^{H\left(\mathbf{p}\right)} \left( 1+ {p}_n\ln\,2\right) -\frac{1}{2}{p}_{n}\\ =& \frac{1}{e}2^{H\left(\mathbf{p}\right)} + \left(\frac{2^{H\left(\mathbf{p}\right)}\ln 2}{e}-\frac{1}{2}\right) p_n,
\end{align*}
{where $p_n$'s coefficient is positive whenever $H\left(\mathbf{p}\right)\geq \log \frac{e}{2\ln2}$. This ends the proof.}
\end{proof}

We can further refine this lemma using the techniques of~\cite{popescu2019refinement,tuanuasescu2020exploiting} as follows.
\begin{theorem}
For any positive descending probability distributions $\mathbf{p}\in\mathbb{R}^n$ such that $H\left(\mathbf{p}\right)\geq 1$, we have\label{thm:RioulGen}
\begin{align*}
    \mathbb{E}\left[G\left(\mathbf{p}\right)\right]
    \geq & \sup_{\alpha\in\left[0,1/2\right]} \frac{1}{e}2^{H\left(\mathbf{p}\right)+\frac{h\left(\alpha\right)}{1-\alpha}{p}_n}+\frac{1}{2}-\frac{\alpha}{1-\alpha}{p}_{n}\\
    \geq&  \frac{1}{e}2^{H\left(\mathbf{p}\right)+\frac{1}{2}{p}_n}+\frac{1}{2}-{p}_{n}\geq \frac{1}{e}2^{H\left(\mathbf{p}\right)}+\frac{1}{2}.
\end{align*}
\end{theorem}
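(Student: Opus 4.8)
The plan is to push the split-and-renormalize idea behind Lemma~\ref{rem:Rioul} to its limit. There, a single binary split of the last atom $p_n$ into $(1-\alpha)p_n$ and $\alpha p_n$ produced the factor $h(\alpha)$ in the exponent. If one re-splits the resulting smallest atom $\alpha p_n$ in the same proportion and iterates indefinitely, the leftover mass vanishes geometrically and the bookkeeping converges to a closed form. The clean way to package this is to replace $p_n$ by an entire geometric tail: set
\begin{equation*}
\mathbf{q}=\bigl(p_1,\dots,p_{n-1},\,(1-\alpha)p_n,\,(1-\alpha)\alpha p_n,\,(1-\alpha)\alpha^2 p_n,\dots\bigr),
\end{equation*}
whose atoms beyond rank $n-1$ are $(1-\alpha)\alpha^{k}p_n$ for $k\ge 0$. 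Since $\sum_{k\ge 0}(1-\alpha)\alpha^{k}=1$, this is a probability distribution; and for $\alpha\in(0,1/2]$ it is strictly positive and nonincreasing, since $(1-\alpha)p_n\le p_n\le p_{n-1}$ and the tail decays.

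The crux is two exact identities. First, the grouping property gives $H(\mathbf{q})=H(\mathbf{p})+p_n\,H_{\mathrm{geom}}(\alpha)$, where $H_{\mathrm{geom}}(\alpha)=-\sum_{k\ge0}(1-\alpha)\alpha^k\log\!\bigl((1-\alpha)\alpha^k\bigr)$ is the entropy of the geometric weights; summing the two resulting series (using $\sum_k(1-\alpha)\alpha^k=1$ and $\sum_k k(1-\alpha)\alpha^k=\tfrac{\alpha}{1-\alpha}$) collapses this to the compact form $H_{\mathrm{geom}}(\alpha)=\frac{h(\alpha)}{1-\alpha}$, which is exactly the exponent appearing in the theorem. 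Second, the mass $p_n$ formerly at rank $n$ is now spread over ranks $n+k$ with weight $(1-\alpha)\alpha^k p_n$, so the guessing entropy increases by $p_n\sum_k k(1-\alpha)\alpha^k=\frac{\alpha}{1-\alpha}p_n$; that is, $\mathbb{E}[G(\mathbf{q})]=\mathbb{E}[G(\mathbf{p})]+\frac{\alpha}{1-\alpha}p_n$.

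With these in hand the argument mirrors Lemma~\ref{rem:Rioul}. Applying the improved Rioul inequality~\eqref{ineq:rioulimproved} to $\mathbf{q}$ and substituting both identities gives
\begin{equation*}
\mathbb{E}[G(\mathbf{p})]=\mathbb{E}[G(\mathbf{q})]-\tfrac{\alpha}{1-\alpha}p_n\ \ge\ \tfrac1e\,2^{H(\mathbf{p})+\frac{h(\alpha)}{1-\alpha}p_n}+\tfrac12-\tfrac{\alpha}{1-\alpha}p_n,
\end{equation*}
and taking the supremum over $\alpha\in[0,1/2]$ yields the first displayed bound; the second is the $\alpha=1/2$ specialization (where $h(1/2)=1$). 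For the last inequality I would factor out $2^{H(\mathbf{p})}$ and apply $2^{x}\ge 1+x\ln 2$ with $x=2p_n$, reducing the claim to nonnegativity of the coefficient of $p_n$; this holds iff $2^{H(\mathbf{p})}\ge \frac{e}{2\ln 2}$, i.e. $H(\mathbf{p})\ge\log\frac{e}{2\ln 2}\approx 0.97$ bits, which is guaranteed by the hypothesis $H(\mathbf{p})\ge 1$.

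The step I expect to require the most care is the legitimacy of applying~\eqref{ineq:rioulimproved} to $\mathbf{q}$, since $\mathbf{q}$ now has countably infinite support rather than finite support. I would handle this by truncation: replace the geometric tail by its first $N$ atoms and place the residual mass $\alpha^{N}p_n$ on a single extra atom at rank $n+N$ (this keeps the distribution nonincreasing precisely because $\alpha\le 1-\alpha$), apply~\eqref{ineq:rioulimproved} to the resulting finite distribution, and let $N\to\infty$; both $H$ and $\mathbb{E}[G]$ converge to the closed forms above (the stray atom contributes $(n+N)\alpha^N p_n\to 0$), so the inequality passes to the limit. The remaining verifications—that the two infinite series equal $\frac{h(\alpha)}{1-\alpha}$ and $\frac{\alpha}{1-\alpha}$—are routine.
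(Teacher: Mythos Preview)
Your proof is correct and is essentially the paper's own argument viewed from its limit: the paper builds a sequence $\mathbf{Q}_k$ by repeatedly splitting the last atom in the ratio $(1-\alpha,\alpha)$ and lets $k\to\infty$, whereas you jump directly to the limiting geometric-tail distribution $\mathbf{q}$ and then justify the use of~\eqref{ineq:rioulimproved} by truncation---and your truncated distribution (first $N$ tail atoms plus the residual $\alpha^{N}p_n$) is exactly the paper's $\mathbf{Q}_N$. Your packaging is arguably cleaner, since the closed forms $\tfrac{h(\alpha)}{1-\alpha}$ and $\tfrac{\alpha}{1-\alpha}$ fall out of the geometric series identities at once rather than through a telescoping iteration.
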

\begin{proof}
Given the initial decreasing $\mathbf{p}$, we construct a sequence of probability distributions $\left\{\mathbf{Q}_k\right\}$, recursively defined using the procedure in the previous proof.

We begin by fixing an arbitrary parameter $\alpha\in\left[0,1/2\right]$ as above. 
Denoting by $Q_{k,i}$ the $i$\textsuperscript{th} component of the sequence $\mathbf{Q}_k$, we define the terms of the list $\left\{\mathbf{Q}_k\right\}$ as follows. We let the support of the first term coincide with $\mathbf{p}$, i.e. $\mathbf{Q}_{0}=\left(p_0,\,p_1,\,\dots,\,p_n,\,0,\,0,\,\dots,\,0,\,\dots\right)$, and we define the other terms by recurrence: 
\begin{align*}
    \mathbf{Q}_{k+1}=&\left(Q_{k,0},\,Q_{k,1},\,\dots,\,Q_{k,n+k-1},\,\right.\\
    &\left.\left(1-\alpha\right)Q_{k,n+k},\,\alpha Q_{k,n+k},\,0,\,0,\,\dots,\,0,\,\dots\right).
\end{align*}
and at each step of the construction we have the inequality
\begin{align*}
    \mathbb{E}\left[G\left({\mathbf{Q}_k}\right)\right]
    =&\mathbb{E}\left[G\left({\mathbf{Q}_{k+1}}\right)\right] -\alpha{Q}_{k,n+k}\\
    \geq& \frac{2^{H\left(\mathbf{Q}_{k+1}\right)}}{e}-\alpha{Q}_{k,n+k}
    > \frac{2^{H\left(\mathbf{\mathbf{Q}_k}\right)}}{e}+\frac{1}{2}.
\end{align*}
After the first $k$ steps of the construction we find
{\small{\begin{align*}
    \mathbb{E}\left[G\left(\mathbf{p}\right)\right]
    =&\mathbb{E}\left[G\left({\mathbf{Q}_{k}}\right)\right]-{p}_n\alpha\frac{1-\alpha^k}{1-\alpha}\\
    =&\mathbb{E}\left[G\left({\mathbf{Q}_{k}}\right)\right] + \sum_{j=0}^{k-1} \left(\mathbb{E}\left[G\left({\mathbf{Q}_{j}}\right)\right] - \mathbb{E}\left[G\left({\mathbf{Q}_{j+1}}\right)\right]\right)\\
    \geq&\frac{1}{2}2^{H\left(\mathbf{Q}_{k}\right)}+\frac{1}{2} + \sum_{j=0}^{k-1} \left(\mathbb{E}\left[G\left({\mathbf{Q}_{j}}\right)\right] - \mathbb{E}\left[G\left({\mathbf{Q}_{j+1}}\right)\right]\right)\\
    >&\frac{1}{e}2^{H\left(\mathbf{Q}_{k-1}\right)}+\frac{1}{2} + \sum_{j=0}^{k-2} \left(\mathbb{E}\left[G\left({\mathbf{Q}_{j}}\right)\right] - \mathbb{E}\left[G\left({\mathbf{Q}_{j+1}}\right)\right]\right)\\
    >&\dots > \frac{1}{e}2^{H\left(\mathbf{Q}_0\right)}+\frac{1}{2}=\frac{1}{e}2^{H\left(\mathbf{p}\right)}+\frac{1}{2},
\end{align*}}}
where the tightest of the enumerated bounds is
{\small{\begin{align*}
    \mathbb{E}\left[G\left(\mathbf{p}\right)\right]\geq& \frac{1}{e}2^{H\left(\mathbf{Q}_{k}\right)}+\frac{1}{2} + \sum_{j=0}^{k-1} \left(\mathbb{E}\left[G\left({\mathbf{Q}_{j}}\right)\right] - \mathbb{E}\left[G\left({\mathbf{Q}_{j+1}}\right)\right]\right)\\
    =&\frac{1}{e}2^{H\left(\mathbf{p}\right)+{p}_nh\left(\alpha\right)\frac{1-\alpha^{k}}{1-\alpha}}+\frac{1}{2} -{p}_n\alpha\frac{1-\alpha^k}{1-\alpha},
\end{align*}}}
which as we have shown increases with $k$ up to the limit
\begin{align*}
    \mathbb{E}\left[G\left(\mathbf{p}\right)\right]\geq \frac{1}{e}2^{H\left(\mathbf{p}\right)+{p}_n\frac{h\left(\alpha\right)}{1-\alpha}}+\frac{1}{2} -{p}_n\frac{\alpha}{1-\alpha}
\end{align*}
valid for any $\alpha\in\left[0,1/2\right]$.
The first desired inequality now follows taking \emph{supremum} over the last equation, the second by substituting $\alpha=1/2$ and the third by noting that all bounds in the sequence are greater than the last one $\frac{1}{e}2^{H\left(\mathbf{p}\right)}+\frac{1}{2}$.
\end{proof}

\section{Application to Side-Channel Analysis}

The improvements shown in previous sections can be very useful in the evaluation
of side-channel attacks. In this context, Choudary and Popescu~\cite{ches17} presented a new approach, based on mathematical bounds of the guessing entropy~\cite{massey1994guessing},
to bound the guessing entropy remaining after a side-channel attack for very large
cryptographic keys (or other secret data). They showed that their method works
for keys of up to 1024 bytes and beyond, working in constant time and memory,
which none of the other methods could do. This provided a great improvement for security evaluations of cryptographic devices.

We remark here that all bounds from this paper are highly computationally scalable, because they are based on the Shannon entropy, which is additive i.e. $H\left(\otimes_i\mathbf{P}_i\right)=\sum_iH\left(\mathbf{P}_i\right)$ for any probability distributions $\mathbf{P}_1,\,\mathbf{P}_2,\,\dots,\,\mathbf{P}_n$~\cite{cover1999elements}.

\subsection{Evaluation of bounds}

In this context of security evaluations, it is interesting to evaluate the accuracy of different bounds for the guessing entropy in different settings. In this section, we analyse the bounds derived in the preceding sections, along with those presented at CHES 2017~\cite{ches17}, using lists of probabilities obtained from the application of Template Attacks~\cite{chari03} on side-channel traces. 

For easier comparison and future reference, we used the same data as in the CHES 2017 paper: A simulated dataset (MATLAB generated  power consumption from the execution of the AES S-box) and a real dataset (power traces from the execution of AES in the AES hardware engine of an AVR XMEGA microcontroller).

For our analysis we have focused on three interesting cases: 1) application of the bounds on single lists of probabilities -- this is equivalent to attacking a single key byte in side-channel attack evaluations; 2) application of the bounds on the combination of two bytes -- this is interesting to observe the scalability of the bounds; 3) application of the bounds on the combination of all 16 AES bytes -- this represents a complete attack on the full AES key and hence is a representative scenario of a full-fledged security evaluation.

\begin{figure}[htb!]
  \includegraphics[width=1.7in]{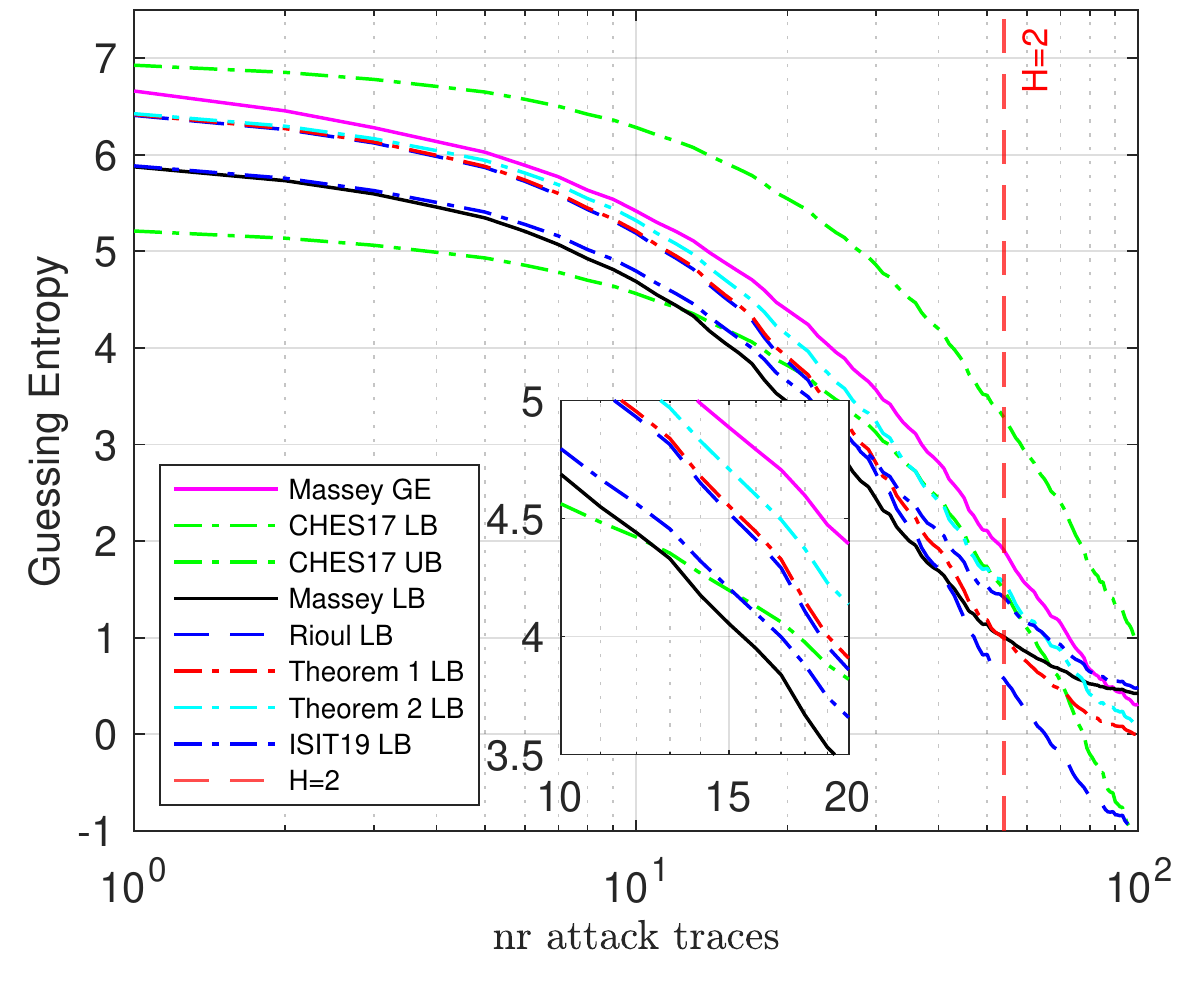}
  \includegraphics[width=1.7in]{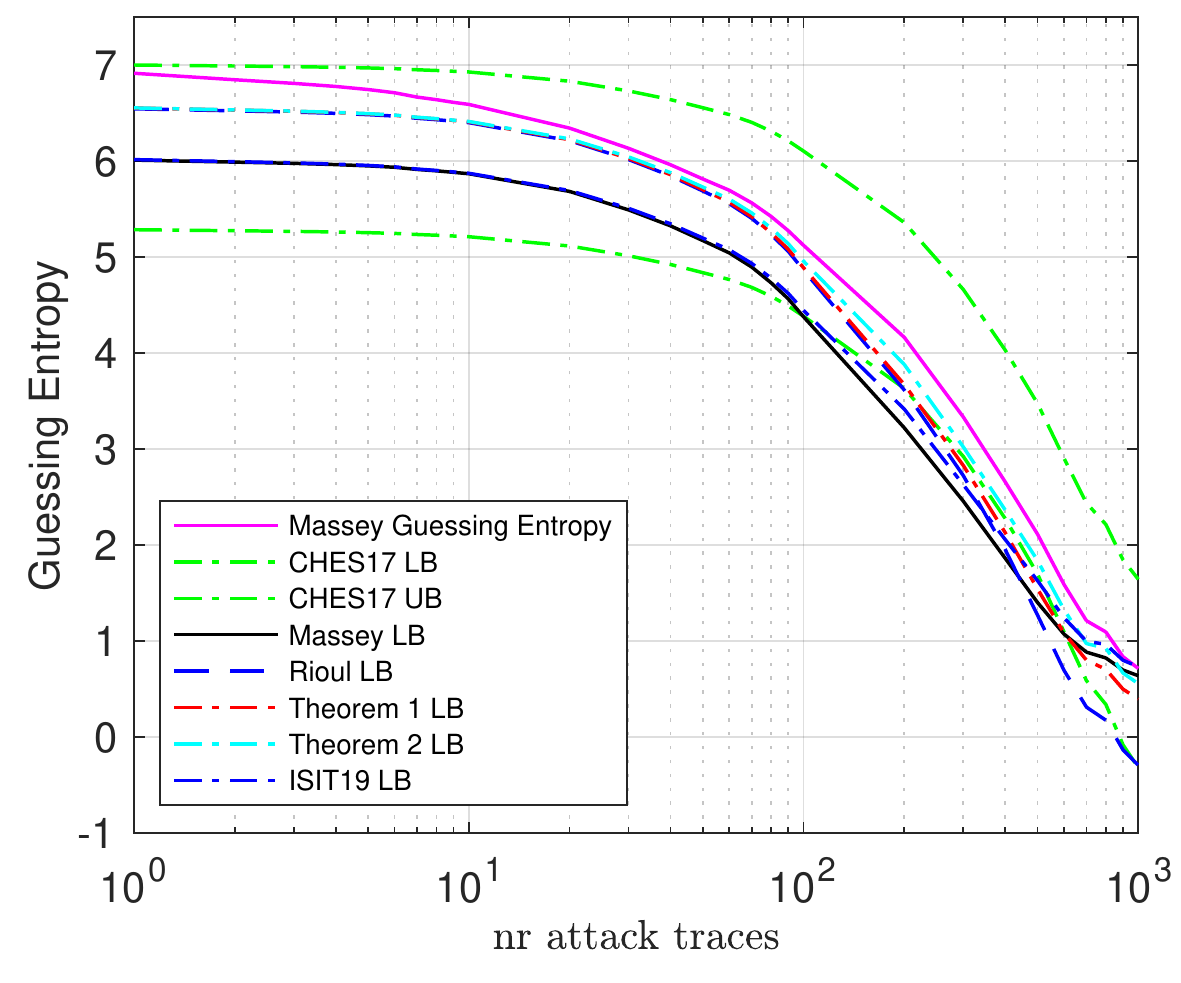}
  \caption{
    \label{fig:bounds1b}
    Bounds for the simulated (left) and real
    (right) datasets, when targeting a single subkey byte. These are averaged
    results over 100 experiments.
    }
\end{figure}

\subsection{Evaluation on a single byte}

We show the bounds for a single key byte on the simulated and real datasets in Figure~\ref{fig:bounds1b}. Here we can see that while the CHES lower bound is tighter when the guessing entropy is low (below 4 bits), in the other (most) cases Rioul's lower bound is better. Furthermore, we can see that Theorem 1 provides a better (tighter) lower bound than Rioul's lower bound and Theorem 2 in turn provides an even better lower bound than Theorem 1.

An interesting artifact appears when the guessing entropy decreases below two bits ($\log(G(\mathbf{p}))=1$), where the Massey inequality (and the ones in ISIT 2019~\cite{popescu2019refinement}) does not necessarily hold (considering for example geometric distributions with $p_1\geq 1/2$). In this case, most bounds seem to be tighter than the CHES 2017~\cite{ches17} lower bound. Meanwhile, bounds based on Rioul's inequality all continue to hold in this regime, owing to the fact that it does not impose preconditions on the minimal value of $H(\mathbf{p})$.

\begin{figure}[htb!]
  \includegraphics[width=1.7in]{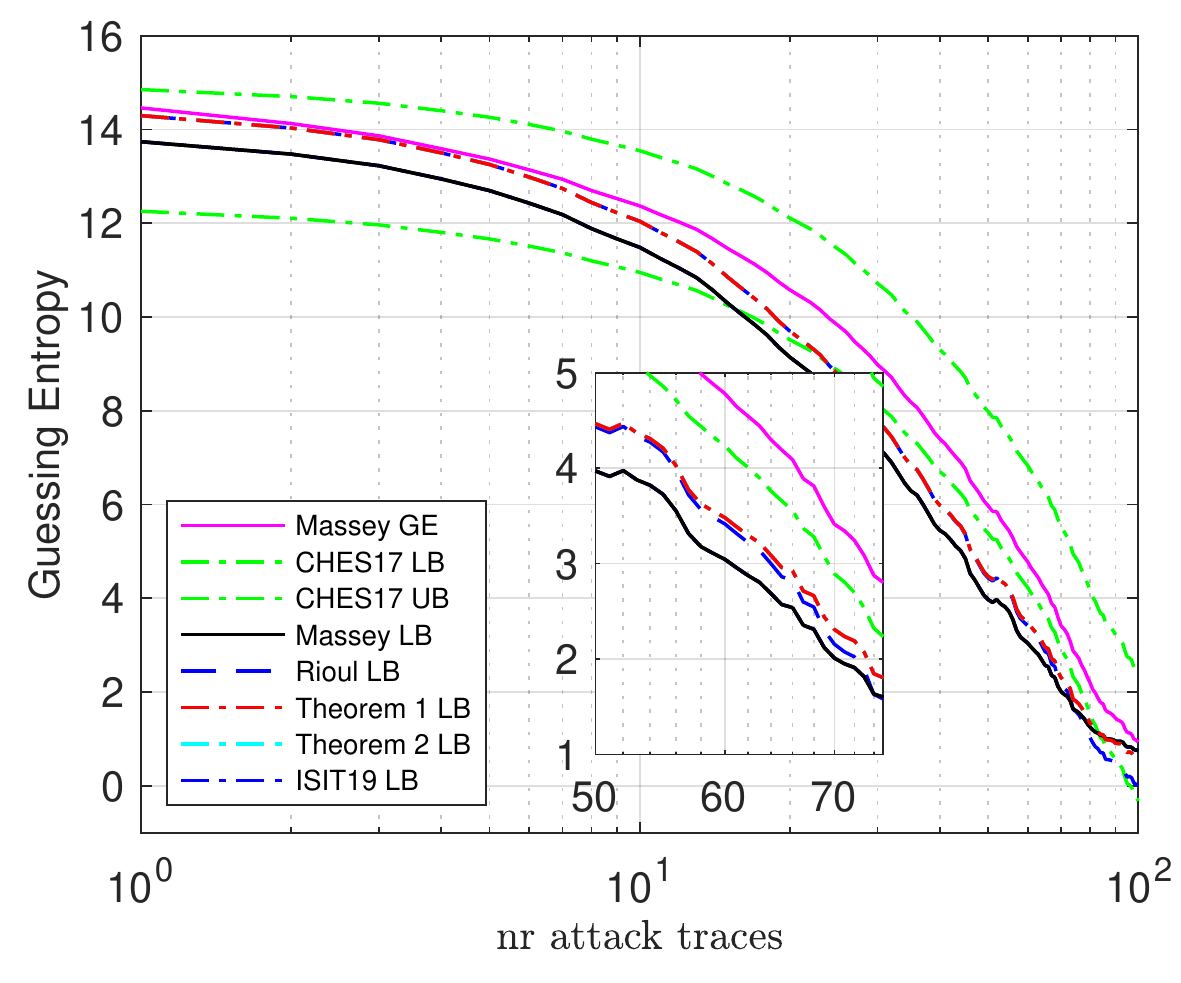}
  \includegraphics[width=1.7in]{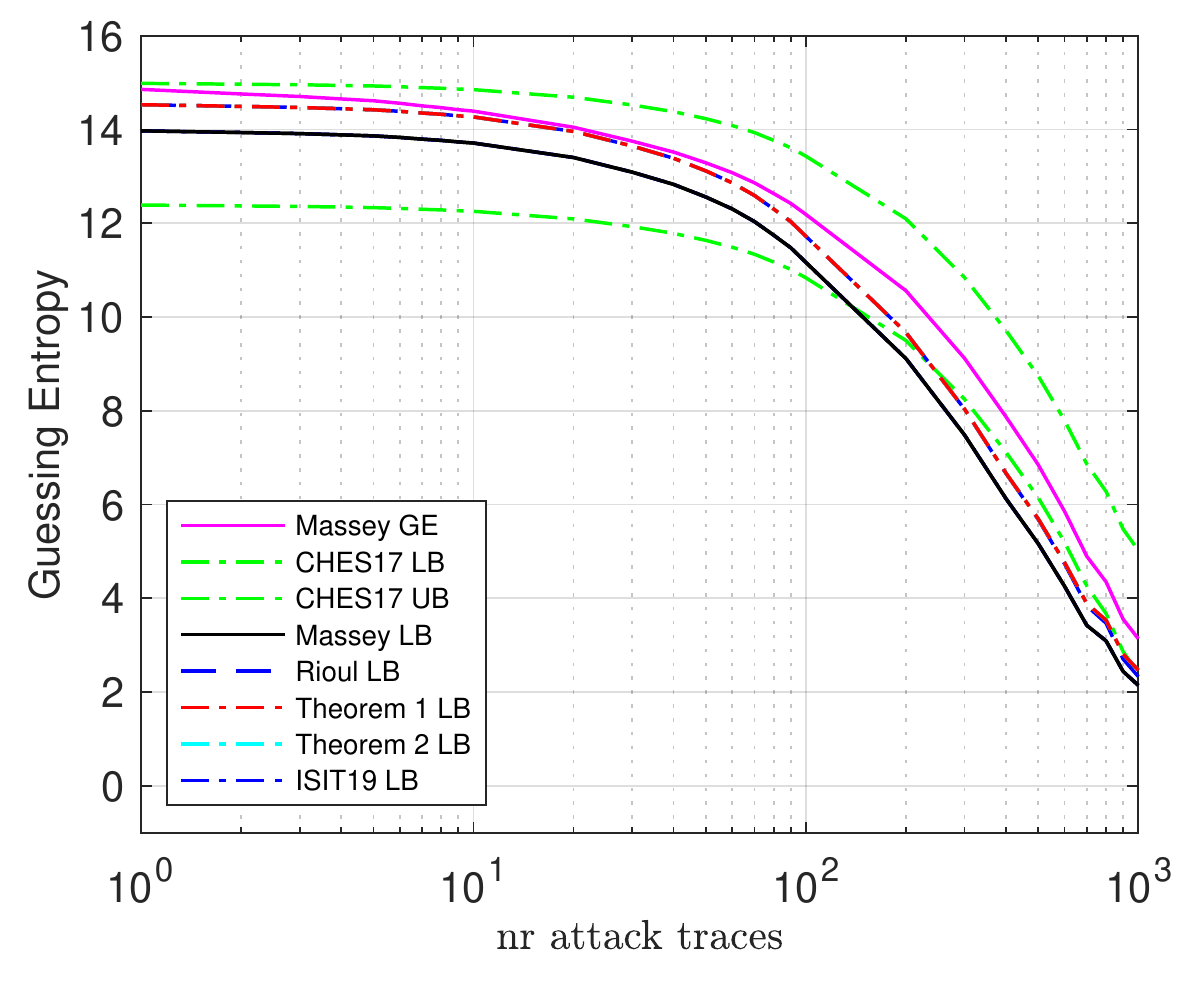}
  \caption{
    \label{fig:bounds2b}
    Bounds for the simulated (left) and real
    (right) datasets, when targeting two subkey bytes. These are averaged
    results over 100 experiments.
    }
\end{figure}

\subsection{Evaluation on two bytes}

We show the bounds when targetting two key bytes on the simulated and real datasets in Figure~\ref{fig:bounds2b}. Here we see again that Rioul's bound is tight when the guessing entropy is higher, but then the CHES lower bound becomes tighter, as the guessing entropy decreases. We can also confirm here that Theorem 1 provides a better (tighter) lower bound than Rioul's lower bound.

However, in this case Theorem 2 provides numerically similar results to Theorem 1, just as the ISIT 2019 lower bound provides numerically similar results to Massey's lower bound. These results are due to the fact that these bounds only differ pairwise in a term containing the minimum probability in the combined list and this minimum becomes zero (or almost zero) when combining two (or more) lists of probabilities in our experiments, which is just a particularity of such experiments.

\begin{figure}[htb]
  \includegraphics[width=1.7in]{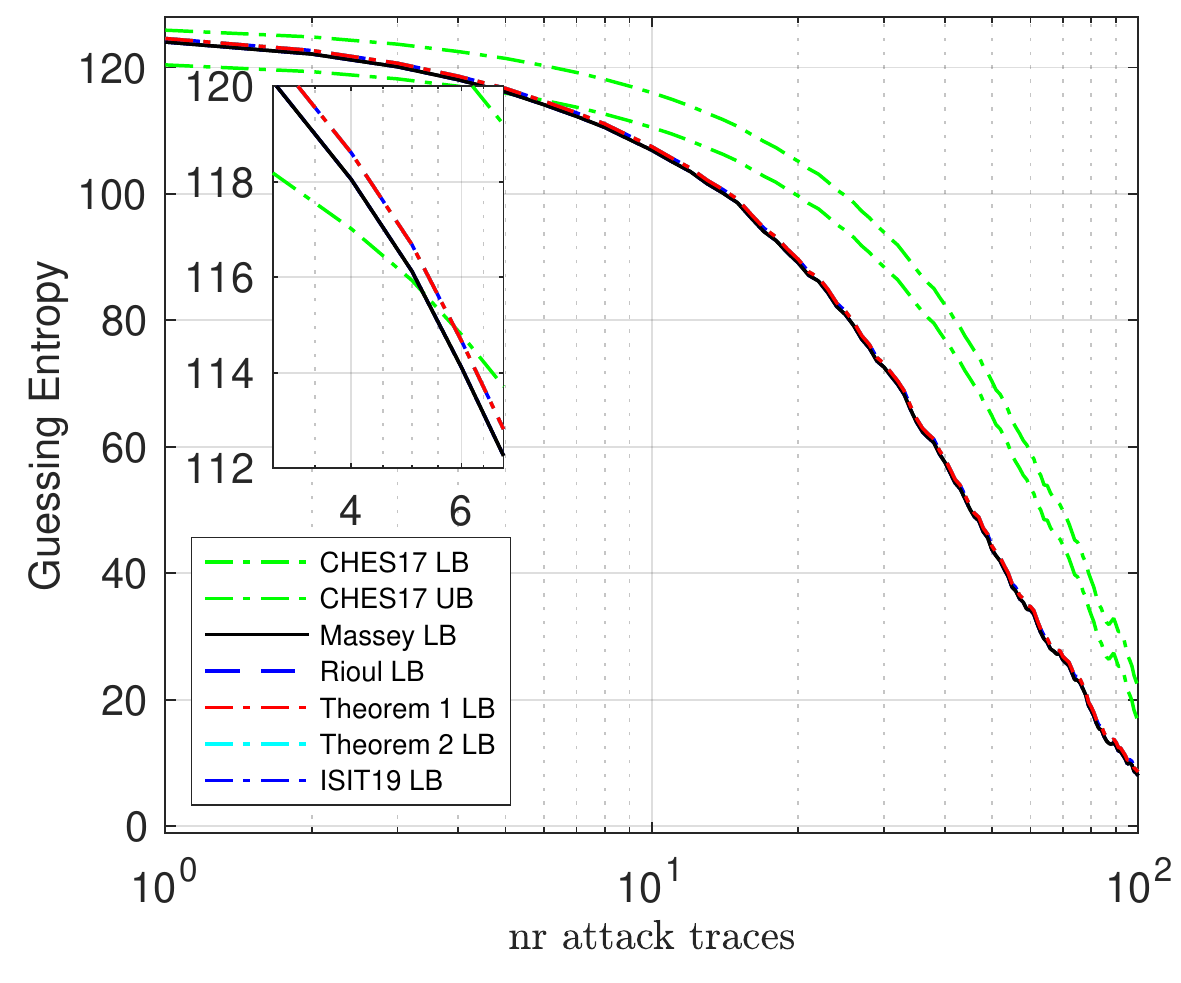}
  \includegraphics[width=1.7in]{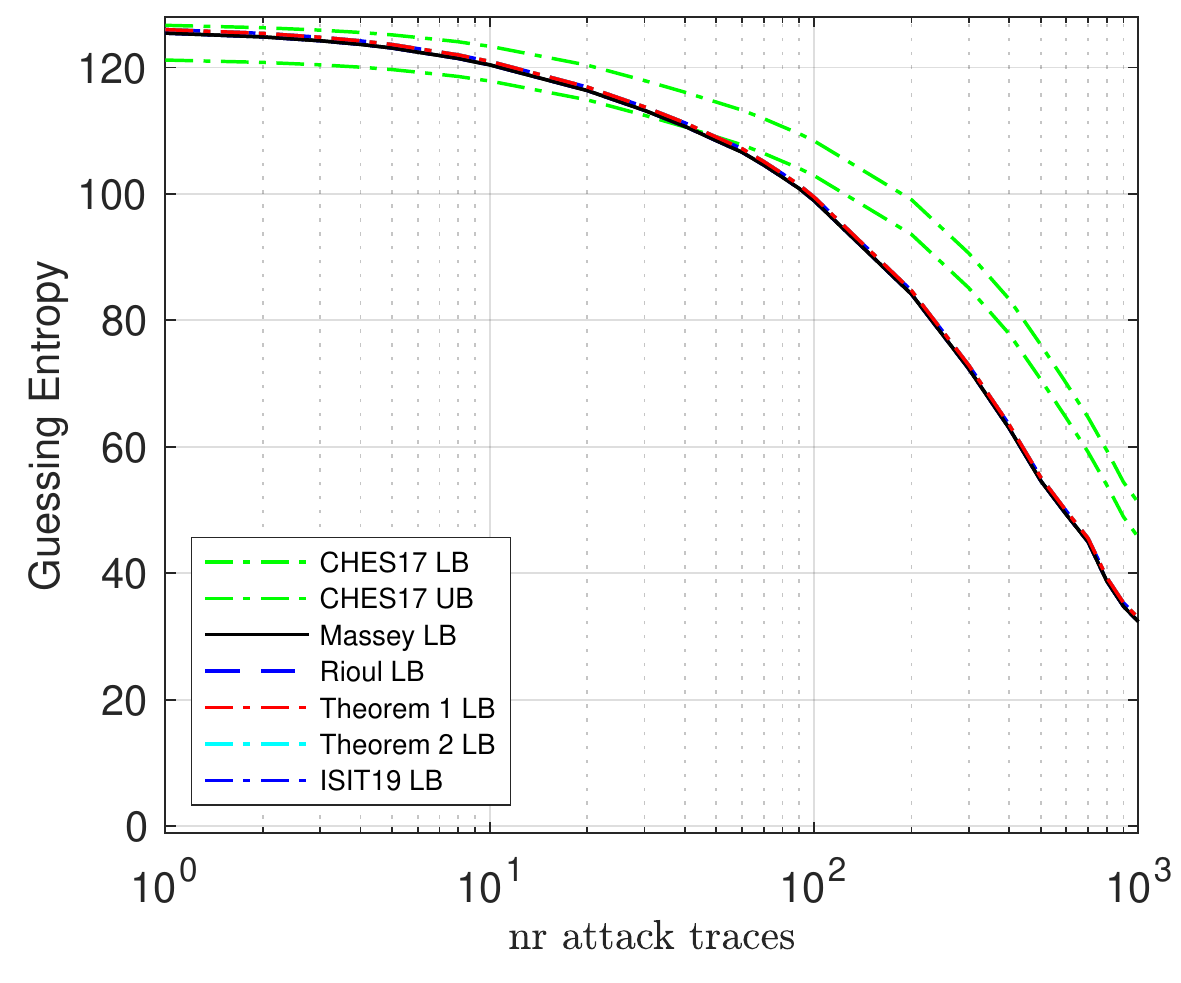}
  \caption{
    \label{fig:bounds16b}
    Bounds for the simulated (left) and real
    (right) datasets, when targeting all the 16 AES key bytes. These are averaged
    results over 100 experiments.
    }
\end{figure}

\subsection{Evaluation on all 16 bytes}

Finally, we show the bounds when targeting all the 16 bytes of the full AES key on the simulated and real datasets in Figure~\ref{fig:bounds16b}.
We did not plot the actual value of the guessing entropy in this case, because it is not possible to compute it: it would require the iteration over (and sorting of)
a list of $2^{128}$ elements. Hence, in this case the computationally efficient bounds compared in this paper become very valuable. From the figure we see again that when the guessing entropy is very high (e.g. above 120 bits), all the lower bounds presented in this paper are tighter than the CHES 2017 lower bound. However, as soon as the guessing entropy decreases below 120 bits, the CHES 2017 lower bound becomes closer to the upper bound than the other lower bounds.

We can also confirm here that Rioul's lower bound is a better (tighter) lower bound than Massey's lower bound. However, in this case we observe that Theorem 1 and 2 provide numerically similar results to Rioul's lower bound. Nevertheless, we are impressed by the scalability of such bounds, thanks to the easy computation of the Shannon entropy of product distributions.

\section{Conclusion}
In this paper, the asymptotically optimal Massey-like inequality is determined as an improved Rioul's inequality by an additive constant of $1/2$. Then, using the techniques of~\cite{popescu2019refinement,tuanuasescu2020exploiting}, this inequality is further refined for finite support distributions allowing us to increase the multiplicative constant depending on the smallest probability $p_n$. Finally, the results are applied to the task of side-channel attack evaluation and compared to the best on the market. It is shown that under certain conditions, the Shannon entropy is in fact a precious quantifier of guessing entropy because  it is computationally scalable thanks to its additivity property. 

For future work we are very interested in further results based on other (additive) entropies, such as R\'{e}nyi entropies where other guessing bounds are already investigated~\cite{rioul2021variations} past their original use in moment inequalities~\cite{arikan1996inequality,sason2018improved,kuzuoka2019conditional} and other derived problems such as guessing with limited (or no) memory~\cite{huleihel2017guessing}.

\section*{Acknowledgment}

This work was partially supported the Romanian Ministry of Education
and Research, CNCS -- UEFISCDI, project number PN-III-P1-1.1-TE-2019-2245, within PNCDI III.




%

\begin{IEEEbiography}[{\includegraphics[width=1in,height=1.25in,clip,keepaspectratio]{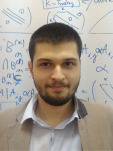}}]{Andrei T\u{a}n\u{a}sescu} is a young researcher at University POLITEHNICA of Bucharest. His main research interests are Quantum Computing and Quantum Information Theory.
\end{IEEEbiography}

\begin{IEEEbiography}[{\includegraphics[width=1in,height=1.25in,clip,keepaspectratio]{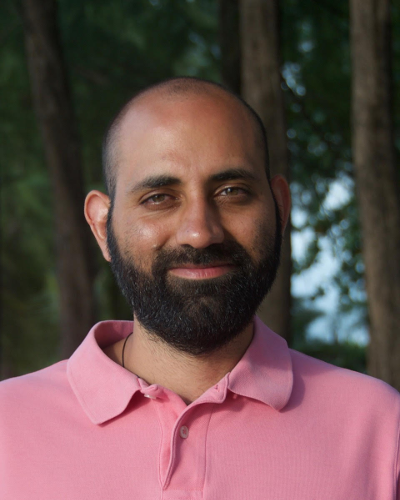}}]{Marios O. Choudary}
is Senior Lecturer in Computer Science at the University Politehnica
of Bucharest. His research interests include authentication and
security protocols, applied cryptography and side-channel attack
evaluations. He graduated from the University Politehnica of
Bucharest in 2008 and then did his MPhil and PhD in Computer Science at the University of Cambridge Computer Laboratory.
\end{IEEEbiography}

\begin{IEEEbiography}[{\includegraphics[width=1in,height=1.25in,clip,keepaspectratio]{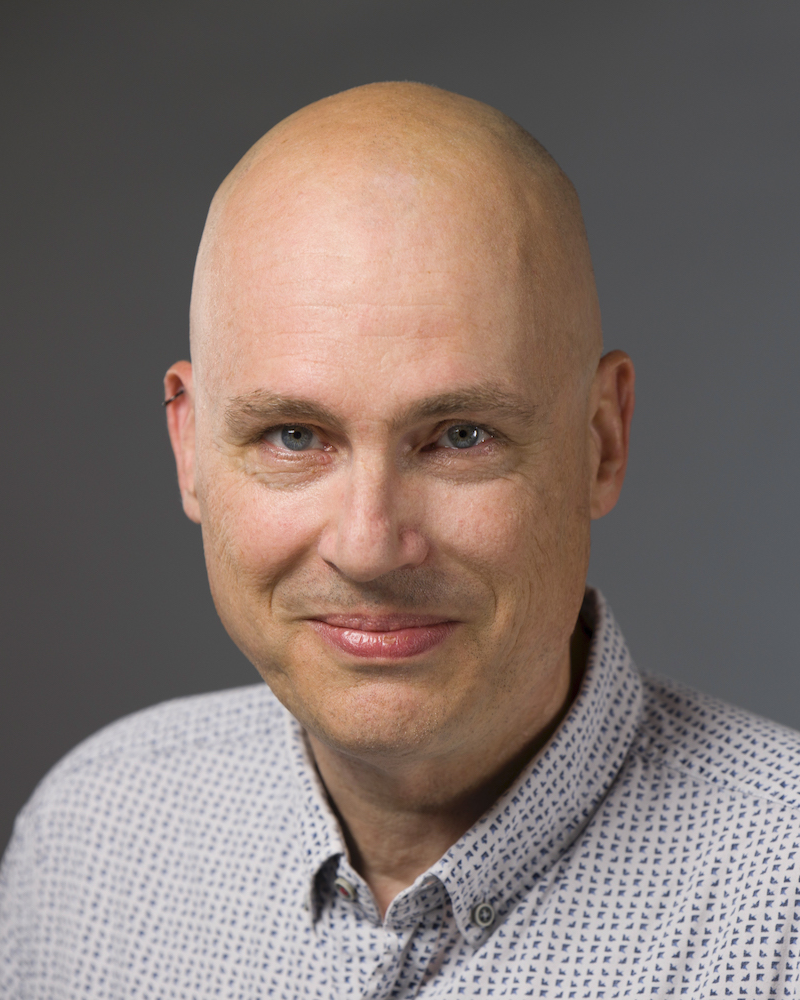}}]{Olivier Rioul} 
is full Professor at the Department of Communication and Electronics, in the Laboratoire de Traitement et Communication de l'Information (LTCI), T\'el\'ecom Paris, Institut Polytechnique de Paris, France. He graduated from École Polytechnique, Paris, France in 1987 and from École Nationale Supérieure des Télécommunications, Paris, France in 1989. He obtained his PhD degree from École Nationale Supérieure des Télécommunications, Paris, France in 1993. His research interests are in applied mathematics and include various, sometimes unconventional, applications of information theory such as inequalities in statistics, hardware security, and experimental psychology. He has been teaching information theory at
various French universities for more than twenty years and has published a textbook which has become a classical French reference in the field. 
\end{IEEEbiography}

\begin{IEEEbiography}[{\includegraphics[width=1in,height=1.25in,clip,keepaspectratio]{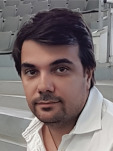}}]{Pantelimon George Popescu} is Professor at the Computer Science and Engineering Department of University POLITEHNICA of Bucharest. His main fields of interest include Quantum Computing, Numerical Methods, Information Theory and Inequalities.
\end{IEEEbiography}





\end{document}